\theoremstyle{plain}
\newtheorem{thm}{Theorem}[section]
\newtheorem{prop}{Proposition}[section]
\newtheorem{lem}{Lemma}[section]
\let\oldsqrt\sqrt
\def\sqrt{\mathpalette\DHLhksqrt}
\def\DHLhksqrt#1#2{%
\setbox0=\hbox{$#1\oldsqrt{#2\,}$}\dimen0=\ht0
\advance\dimen0-0.2\ht0
\setbox2=\hbox{\vrule height\ht0 depth -\dimen0}%
{\box0\lower0.4pt\box2}}
\renewcommand{\[}{\begin{eqnarray*}}
\renewcommand{\]}{\end{eqnarray*}}
\newcommand{\la}{\begin{eqnarray}}
\newcommand{\al}{\end{eqnarray}}
\DeclareMathOperator*{\bigtimes}{\mbox{\LARGE$\times$}}
\renewcommand{\epsilon}{\varepsilon}
\renewcommand{\phi}{\varphi}
\newcommand{\N}{{\mathbb N}}
\newcommand{\R}{{\mathbb R}}
\newcommand{\cA}{{\mathcal A}}\newcommand{\cB}{{\mathcal B}}
\newcommand{\cP}{{\mathcal P}}\newcommand{\cQ}{{\mathcal Q}}
\newcommand{\cX}{{\mathcal X}}
\newcommand{\cY}{{\mathcal Y}}
\newcommand{\Prob}{\mbox{\rm Prob}}
\newcommand{\Mark}{\mbox{\rm Mark}}
\newcommand{\im}{\mbox{\,\tiny$\square$}\,} 
\newcommand{\dd}{{\,\text{\rm d}}}
\newcommand{\1}{\bold 1}
\renewcommand{\epsilon}{\varepsilon}\renewcommand{\phi}{\varphi}%
\renewcommand{\rho}{\varrho}\renewcommand{\theta}{\vartheta}    %
\newcommand{\cxle}{\le_{\text{\rm cx}}} 
\newcommand{\Halmos}{$\quad\square$}
\begin{document}

\begin{frontmatter}


\title{Combining individually valid and conditionally i.i.d. P-variables}

\runtitle{Combining P-variables}

\begin{aug}
\author{\fnms{Lutz} \snm{Mattner}
\ead[label=e1]{mattner@uni-trier.de}}

\runauthor{Lutz Mattner}

\affiliation{Universit\"at Trier} 
 {\rm \today}\\
 {\footnotesize\tt \jobname.tex}

\address{Universit\"at Trier\\ Fachbereich IV -  Mathematik\\ 
54286 Trier\\ Germany\\ \printead{e1}}


\end{aug}

\begin{abstract}
For a given testing problem, let $U_1,\ldots,U_n$
be individually valid and conditionally on the data
i.i.d.\ P-variables (often called P-values).
For example, the data could come in groups,
and each $U_i$ could be based on subsampling 
just one datum from each group, 
in order to justify an independence assumption under the hypothesis.
The problem is then to deterministically combine the $U_i$ 
into a valid summary P-variable. 
Restricting here our attention to  functions of
a given order statistic $U_{k:n}$ of the $U_i$, 
we compute the function $f_{n,k}$ which is smallest
among all increasing functions $f$ such
that  $f(U_{k:n})$ is always a valid P-variable under
the stated assumptions. Since $f_{n,k}(u)\le 1\wedge \left(\frac {n}{k} u\right)$,
with the right hand side being a good approximation 
for the  left when $k$ is large, one may in particular always take the 
minimum of $1$ and twice the left sample median of the given P-variables.  

We sketch the original application of the above 
in a recent study of associations between 
various primate species by Astaras et al.

\end{abstract}

\begin{keyword}[class=AMS]
\kwd[Primary ]{62F03}\kwd{62G10} 
\kwd[; secondary ]{60E15}.
\end{keyword}
\begin{keyword}
\kwd{Derandomization} \kwd{multiple testing} \kwd{P-value}
\kwd{sub-sampling} 
\kwd{binomial distribution inequalities}.
\end{keyword}

\end{frontmatter}



\section{Introduction and main result} 
\subsection{The problem and a summary of its tentative solution}
\label{The.problem.etc}
This paper is motivated by a practical example of  the following kind:
Suppose that, for testing a hypothesis of interest, we have many 
observations collected over time and a test, or more precisely a family 
of tests indexed by the observed sample size, which is judged to  be 
reasonable if independence of the observations could be assumed. If this latter
assumption is in doubt only for observations close
in time, say those taken on the same day, one might
consider choosing  one per day   at random in order to 
apply the test only to those chosen. While this would be   
theoretically correct, or at least less faulty
under the null hypothesis than ignoring the dependence 
issue, its reported  result could be challenged by asking whether perhaps, 
after several trials, specifically a subset of the 
observations leading to the desired rejection had been chosen. 
Hence, given the data, one might consider
repeating the random choice independently $n$
times. If the  test statistics are written as  P-variables,
see Subsection~\ref{subsec.stats.notation} below for formal definitions,
this yields an example of the following more general problem: 
Given $n$ individually valid and conditionally on the data
i.i.d.~P-variables $U_1,\ldots,U_n$, find 
an appropriate summary P-variable, that is, find a 
permutation invariant measurable function
$F:[0,1]^n\rightarrow[0,1]$ such that $V:= F(U_1,\ldots,U_n)$
is a reasonable P-variable for our testing problem. 

The  main result of this paper, Theorem~\ref{Main.result} below, gives  
simple solutions to this general problem, with  each optimal 
within the set of all summary P-variables being isotone functions of 
a fixed order statistic $U_{k:n}$. 

For practical purposes the results of this paper may be summarized 
as follows:  If $k\in \{1,\ldots,n\}$ is fixed in advance, then 
$V:=1\wedge (\frac{n}{k} U_{k:n})$ is a  P-variable under the stated 
assumptions. For $U_{k:n}$ small, and this is the only case of interest,
and for $k$ rather large, this $V$  is approximately the smallest isotone 
function of  $U_{k:n}$ being a P-variable. For moderate $k$, the constant
$\frac nk$ may be replaced by a somewhat smaller one.

Given $n$, it thus remains to select $k$. We tentatively  suggest
on intuitive grounds $k =\left\lfloor \frac {n+1}2\right\rfloor$, so that practically
$V$ is twice the left sample median of the $U_i$,
but we have to admit the mathematical arbitrariness
of this choice. We applied this choice in~\cite{Drill.paper} as explained
in Example~\ref{Example.drill} below. This application triggered off  
the present paper.

Before stating our results formally, let us finish this introduction by
stating the obvious unsolved  problems:
Is there any convincing rationale for
choosing $k$ in a particular way? Are there competitive 
summary P-variables which are not functions of  single order statistics?
Are there significantly better summary P-variables  for interesting particular 
cases of our general problem, say in the 
examples of Section~\ref{Sec.Examples} below?

\subsection{Analysis and probability notations}
We assume $n\in\N:=\{1,2,3,\ldots\}$ and put
$\frac 00 := 0$, unless noted otherwise. 
Terms like ``positive'' or  ``increasing'' are used in the wide sense.
We write $\mathrm{Argmax\,} f$ 
for the set of all global maximizers of the real-valued function $f$, 
and $\mathrm{argmax\,} f$ for its unique element if existing. 
We use indicator notation for sets, like $\1_A$, and for  statements,
like $(x\in A) = \1_A(x)$.
Measurable spaces $(\cX,\cA)$ are simply denoted 
by $\cX$ if the $\sigma$-algebra $\cA$ is clear from the context or
irrelevant. $\cB([a,b])$ denotes the 
Borel $\sigma$-algebra of the interval $[a,b]$. 

Let  $\cX$ and $\cY$ be measurable spaces. Then 
$\Prob(\cX)$ denotes the set of all probability measures 
on $\cX$ and $\Mark(\cX, \cY)$ the set of all Markov kernels 
from $\cX$ to $\cY$. Let $P\in\Prob(\cX)$ and $K\in\Mark(\cX, \cY)$. 
Then their product is denoted by $P\otimes K$, that is, 
$(P\otimes K)(C):=\int_\cX \int_\cY
\1_C(x,y)\,K(x,\mathrm{d}y)\,P(\mathrm{d}x)$
for $C$ belonging to the appropriate product $\sigma$-algebra. 
The same symbol $\otimes$ is also used for products of probability measures
and, accordingly, $P^{\otimes n} := \bigotimes_{j=1}^n P $ denotes 
the product of the $n$ identical factors $P\in\Prob(\cX)$.
Thus, e.g.\ in~\eqref{Def.Knkf} in below, an expression like 
$K(x,\cdot)^{\otimes n}$ with $K \in\Mark(\cX, [0,1])$ and $x\in\cX$,
is a product of $n$ probability measures on $[0,1]$, not a product
really involving kernels.

For $P\in\Prob(\cX)$ and $S:\cX\rightarrow\cY$ measurable,
we use the nonstandard notation $S\im P$ for the distribution,
or image measure, of $S$  with respect to $P$. 
We write $\mathrm{B}_{n,p}$ for the binomial distribution 
with counting density  $\mathrm{b}_{n,p}$, that is,
$\mathrm{b}_{n,p}(k)=\binom{n}{k}p^k(1-p)^{n-k}$ for 
$n \in\N\cup\{0\}$, $p\in{[0,1]}$, and $k\in\{0,\ldots,n\}$,
and further $\mathrm{U}_{[a,b]}$ and $\mathrm{U}_{\{1,\ldots,n\}}$
for continuous or discrete uniform distributions on the indicated
sets. Our use of nonitalic letters here allows us to 
use e.g.~$U$ for a random variable.

\subsection{Hypotheses, P-variables, P-kernels,   P-values}
\label{subsec.stats.notation}
In this paper, we call  {\em hypothesis} any set 
$\cP_0\subseteq \Prob(\cX)$ of probability measures 
on the same sample space $\cX$. We might think of $\cP_0$
as a subset of some strictly larger set 
$\cP\subseteq \Prob(\cX) $, with then 
$\cP\setminus \cP_0$  called the alternative,
but only $\cP_0$ matters in the present investigation.

Let $\cP_0\subseteq \Prob(\cX)$ be a hypothesis.
A {\em P-variable} for $\cP_0$
is a statistic  $U :\cX\rightarrow [0,1]$ which 
is under $\cP_0$ stochastically larger than 
the uniform distribution on $[0,1]$, that is,
\[
  P( U  \le \alpha   ) &\le&\alpha 
 \quad\text{ \rm for $P \in\cP_0$ and $\alpha\in[0,1]$}
\]
A {\em P-kernel} for $\cP_0$ is a kernel
$K\in\Mark(\cX,[0,1])$ with 
\la    \label{Q.rand.P-variable}
  \big(P\otimes K \big)  (\cX\times[0,\alpha])  &\le&\alpha 
 \quad\text{ \rm for $P \in\cP_0$ and $\alpha\in[0,1]$}
\al
Finally, a {\em P-value} is any number $\in$ $[0,1]$
resulting from any process we happen to model by a P-variable or 
a P-kernel. 

Clearly, a kernel $K\in\Mark(\cX,[0,1])$  is a P-kernel for $\cP_0$
iff the coordinate projection  $\cX\times [0,1] \ni (x,u) \mapsto u$
is a  P-variable for the hypothesis $\{P \otimes K : P \in \cP_0\}$ on 
the sample space $\cX\times [0,1]$; 
we refer to the latter hypothesis and sample space 
as {\em extended} versions of the original ones.
Conversely, a statistic $U :\cX\rightarrow [0,1]$
is a P-variable for $\cP_0$ iff
the kernel $\cX\times \cB([0,1]) \ni (x,B) \mapsto \delta_{U(x)}(B)$
is a P-kernel for $\cP_0$, and we refer to such a $K$ as {\em deterministic}. 
In statistical practice, non-deterministic P-kernels often arise through 
the application of Monte Carlo tests,
as in Example~\ref{Example.drill} below.

\subsection{The main result}       \label{Sec:Main.result}
Let $n\in\N$ and $k\in \{1,\ldots,n\}$ be fixed in this subsection,
except for the penultimate  sentence.
Let
$U_{k:n}$ denote the $k$-th order statistic on $[0,1]^n$, so
$U_{k:n}(u) := \min\{v\in[0,1] : \sum_{i=1}^n(u_i \le  v) \ge k  \}$
for $u\in[0,1]^n$. We  put
\la
 \psi_{n,k}(p)&:=&p^{-1}\mathrm{B}_{n,p}(\{k,\ldots,n\})\qquad\text{ for }p\in[0,1]
\al
with $\psi_{n,k}(0)$ defined by continuity.
Let $p_{1,1} := 0$. 
If not $n=k=1$, let 
$p_{n,k} :=  \mathrm{argmax\,} \psi_{n,k}$ which exists 
 according to Lemma~\ref{Lem.psi} below.  Finally, let
\la
 c_{n,k} :=\max_{p\in[0,1]}\psi_{n,k}(p) 
\al
and
\la \label{Def.fnk.new}
\qquad f_{n,k}(u) &:=& u\, \psi_{n,k}(p_{n,k}\vee u)
  \,\,\,=\,\,\,\left\{\begin{array}{ll}
  c_{n,k}u  & \text{ for }u\in[0,p_{n,k} ]\\
  \mathrm{B}_{n,u}(\{k,\ldots,n\})   &\text{ for }u\in[p_{n,k},1]
  \end{array}\right.
\al
The following theorem and its  accompanying proposition, 
both proved in Section~\ref{Sec.Aux.Proofs} below, make precise 
the claims from Subsection~\ref{The.problem.etc}
and constitute the main technical result of this paper. 

\begin{thm} \label{Main.result}
The following two conditions on
increasing functions $f:[0,1]\rightarrow [0,1]$ are equivalent:

(i) If  $K$ is a P-kernel for a hypothesis $\cP_0$ 
on a sample space $\cX$, then $ K_{n,k,f} \in \Mark(\cX,[0,1])$
defined  by
\la              \label{Def.Knkf}
  K_{n,k,f}(x,\cdot) &:=& 
    (f\circ U_{k:n}) \im K(x,\cdot)^{\otimes n}
  \qquad \text{ \rm for } x\in\cX
\al
is a P-kernel for $\cP_0$.

(ii) $f\ge f_{n,k}$.
\end{thm}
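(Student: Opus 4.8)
The plan is to reduce both implications, via the definition \eqref{Q.rand.P-variable} of a P-kernel, to the elementary identity that, for $t\in[0,1]$ and $x\in\cX$,
\[
 K(x,\cdot)^{\otimes n}\big(\{u\in[0,1]^n:U_{k:n}(u)\le t\}\big) &=& \mathrm{B}_{n,K(x,[0,t])}(\{k,\ldots,n\}),
\]
which holds because, under $K(x,\cdot)^{\otimes n}$, the number of coordinates lying in $[0,t]$ is $\mathrm{B}_{n,K(x,[0,t])}$-distributed while, by the defining formula for $U_{k:n}$, the event above just says this number is at least $k$. Since $f$ is increasing, its sublevel set $S_\alpha:=\{u\in[0,1]:f(u)\le\alpha\}$ is an interval with $S_\alpha\subseteq[0,\tau]$, where $\tau=\tau(\alpha):=\sup\big(S_\alpha\cup\{0\}\big)\in[0,1]$ and $f(u)\le\alpha$ for $0\le u<\tau$; writing $K_{n,k,f}(x,[0,\alpha])=K(x,\cdot)^{\otimes n}\big(\{u:U_{k:n}(u)\in S_\alpha\}\big)$ and using the identity, one gets for every $x\in\cX$
\[
 \mathrm{B}_{n,K(x,[0,u])}(\{k,\ldots,n\}) &\le& K_{n,k,f}(x,[0,\alpha]) \le \mathrm{B}_{n,K(x,[0,\tau])}(\{k,\ldots,n\}),
\]
the upper bound always, the lower bound whenever $u\in S_\alpha$. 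Two properties of $f_{n,k}$ enter: (a) from \eqref{Def.fnk.new} and maximality of $\psi_{n,k}$ at $p_{n,k}$, $\mathrm{B}_{n,u}(\{k,\ldots,n\})=u\,\psi_{n,k}(u)\le u\,\psi_{n,k}(p_{n,k}\vee u)=f_{n,k}(u)$, with equality for $u\ge p_{n,k}$; (b) $f_{n,k}$ is continuous, increasing and concave --- being affine with slope $c_{n,k}$ on $[0,p_{n,k}]$ and coinciding on $[p_{n,k},1]$ with $u\mapsto\mathrm{B}_{n,u}(\{k,\ldots,n\})$, which is concave there because $p_{n,k}$ lies at or to the right of the mode of the corresponding beta density (the tangent at $p_{n,k}$ running through the origin, by $\psi_{n,k}'(p_{n,k})=0$), the two one-sided slopes matching at $p_{n,k}$ for the same reason. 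Both (a) and (b), along with the existence of $p_{n,k}=\mathrm{argmax\,}\psi_{n,k}$, are presumably part of Lemma~\ref{Lem.psi}.

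For (ii)$\Rightarrow$(i): assume $f\ge f_{n,k}$, let $K$ be a P-kernel for $\cP_0$, fix $P\in\cP_0$ and $\alpha\in[0,1]$, set $\tau:=\tau(\alpha)$, and put $p_x:=K(x,[0,\tau])$. Using in turn the upper bound above, property (a), concavity of $f_{n,k}$ with Jensen's inequality, monotonicity of $f_{n,k}$, and the instance $\int_\cX K(x,[0,\tau])\,P(\dd x)\le\tau$ of \eqref{Q.rand.P-variable},
\[
 \int_\cX K_{n,k,f}(x,[0,\alpha])\,P(\dd x) &\le& \int_\cX f_{n,k}(p_x)\,P(\dd x)\le f_{n,k}\!\Big(\int_\cX p_x\,P(\dd x)\Big)\le f_{n,k}(\tau).
\]
Finally $f_{n,k}(\tau)\le\alpha$: immediate if $\tau=0$ (then $f_{n,k}(\tau)=0$), and if $\tau>0$ a consequence of continuity of $f_{n,k}$ together with $f_{n,k}(u)\le f(u)\le\alpha$ for $0\le u<\tau$. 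Hence $K_{n,k,f}$ is a P-kernel for $\cP_0$.

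For (i)$\Rightarrow$(ii) I argue by contraposition: let $f$ be increasing with $f(u_0)<f_{n,k}(u_0)$ for some $u_0$, necessarily in $(0,1]$. Put $p^\ast:=p_{n,k}\vee u_0$ and $w:=u_0/p^\ast$, both in $(0,1]$, take $\cX:=\{1,2\}$ and $\cP_0:=\{P\}$ with $P:=(1-w)\delta_1+w\delta_2$, and let $K(1,\cdot):=\delta_1$, $K(2,\cdot):=p^\ast\,\mathrm{U}_{[0,u_0]}+(1-p^\ast)\delta_1$. A direct check gives $\int_\cX K(x,[0,\alpha])\,P(\dd x)=\alpha$ for $\alpha\in[0,u_0]$, $=u_0$ for $\alpha\in[u_0,1)$, and $=1$ for $\alpha=1$, so $K$ is a P-kernel for $\cP_0$. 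Now choose $\alpha$ with $f(u_0)\le\alpha<f_{n,k}(u_0)$, possible as $0\le f(u_0)<f_{n,k}(u_0)\le1$; then $u_0\in S_\alpha$, and since $K(1,[0,u_0])=\delta_1([0,u_0])$ contributes nothing (its weight $1-w$ vanishing if $u_0=1$, and $\delta_1([0,u_0])=0$ if $u_0<1$) while $K(2,[0,u_0])=p^\ast$, the lower bound above yields
\[
 \int_\cX K_{n,k,f}(x,[0,\alpha])\,P(\dd x) &\ge& w\,\mathrm{B}_{n,p^\ast}(\{k,\ldots,n\})=w\,p^\ast\,\psi_{n,k}(p^\ast)=u_0\,\psi_{n,k}(p_{n,k}\vee u_0)=f_{n,k}(u_0)>\alpha.
\]
Thus $K_{n,k,f}$ violates \eqref{Q.rand.P-variable} and is not a P-kernel for $\cP_0$, contradicting (i); so (i) implies $f\ge f_{n,k}$.

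The main obstacle is property (b), the concavity of $f_{n,k}$ --- equivalently, that $f_{n,k}$ is the least concave majorant of $u\mapsto\mathrm{B}_{n,u}(\{k,\ldots,n\})$: concavity powers the Jensen step in sufficiency, and the two-point $P$ above is exactly the distribution realizing the chord that defines $f_{n,k}$ on $[0,p_{n,k}]$ in \eqref{Def.fnk.new}, which is what makes the necessity bound tight. Everything else --- the reductions, the binomial identity, the check that $K$ is a P-kernel, and the bookkeeping of open-versus-closed sublevel sets and of the boundary values $\alpha\in\{0,1\}$ --- is routine.
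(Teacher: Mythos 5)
Your proof is correct, and although it shares the paper's skeleton --- reduce $K_{n,k,f}(x,[0,\alpha])$ to $\mathrm{B}_{n,K(x,I)}(\{k,\ldots,n\})$ with $I=f^{-1}([0,\alpha])$ via the order-statistic identity, and then solve an extremal problem for the law of $K(\cdot,I)$ under $P$ --- you solve that extremal problem by a genuinely different route. The paper proves, via the convex order and the convex--concave shape of $\phi(p)=\mathrm{B}_{n,p}(\{k,\ldots,n\})$ (Lemmas~\ref{lem.convex.order}--\ref{lem.main}), that $\max\{\int \phi\,\dd Q : Q\in\Prob([0,1]),\ \mu(Q)\le\beta\}=f_{n,k}(\beta)$, using the ``$\le$'' half for sufficiency and an explicit maximizer for necessity. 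You replace the ``$\le$'' half by the pointwise bound $\phi\le f_{n,k}$ together with concavity of $f_{n,k}$ and Jensen's inequality --- the same fact in dual form, and arguably more transparent --- and for necessity your two-point sample space $\{1,2\}$ realizes exactly the extremal two-point law $(1-u_0/p^\ast)\delta_0+(u_0/p^\ast)\delta_{p^\ast}$ that the paper manufactures with the kernel $t\delta_{xt}+(1-t)\mathrm{U}_{[t,1]}$ on $([0,1],\mathrm{U}_{[0,1]})$, but more economically; your contrapositive is also more direct than the paper's argument via the inequality $f_{n,k}(\sup f^{-1}([0,\alpha]))\le\alpha$. The one ingredient you use that the paper never states explicitly is the concavity of $f_{n,k}$. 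Your justification (the tangent of $\phi$ at $p_{n,k}$ passes through the origin because $\psi_{n,k}'(p_{n,k})=0$, and $\phi$ is concave to the right of $p_{n,k}\ge\frac{k-1}{n-1}$) is correct for an interior maximizer, and one checks from the proof of Lemma~\ref{Lem.psi} that strict convexity of $\phi$ on $[0,\frac{k-1}{n-1}]$ forces $p_{n,k}>\frac{k-1}{n-1}$ whenever the latter is positive, so the only remaining cases are $p_{n,k}=0$ (then $f_{n,k}=\phi$, concave) and $p_{n,k}=1$ (then $f_{n,k}$ is linear); you should say a word about these, but they are harmless. With that small addition, everything checks out.
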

Since $f_{n,k}:[0,1]\rightarrow [0,1]$ is increasing, 
Theorem~\ref{Main.result} states that $f_{n,k}$ is the smallest,
and thus optimal, eligible 
function satisfying condition~(i).

\begin{prop}\label{fnk-properties}
$f_{n,k}$ is a continuous and increasing bijection of  $[0,1]$
onto $[0,1]$, and is linear with slope $c_{n,k}$ on 
$[0,p_{n,k}]\supseteq  [0,\frac{k-1}{n-1}]$.
We have
\la   \label{cnk.bounds}
  \frac{ \frac nk}{1+5\,k^{-1/3} }     &\le& c_{n,k} \,\,\,\le\,\,\, \frac nk 
\al
and, for $u\in[0,1]$,  
\la \label{fnk.bound.new}
   \frac{1\wedge \frac {nu}k}{1+5\,k^{-1/3} } 
  &\le& f_{n,k}(u)  
 \,\,\,\le\,\,\, 1\wedge(c_{n,k}u)  
 \,\,\,\le\,\,\,1\wedge\frac {nu}k
\al
\end{prop}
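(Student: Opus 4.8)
The plan is to handle in turn the structural claims about $f_{n,k}$, the two upper bounds, and the two lower bounds, the last being the substantial part.

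\emph{Structure of $f_{n,k}$.} First I would observe that, since $\mathrm B_{n,p}(\{k,\ldots,n\})=\sum_{j=k}^{n}\binom nj p^{j}(1-p)^{n-j}$ is divisible by $p^{k}$ with $k\ge 1$, the function $\psi_{n,k}$ is a polynomial, hence continuous on $[0,1]$; thus $f_{n,k}(u)=u\,\psi_{n,k}(p_{n,k}\vee u)$ is continuous, and its two branches in~\eqref{Def.fnk.new} agree at $u=p_{n,k}$ because $c_{n,k}p_{n,k}=p_{n,k}\psi_{n,k}(p_{n,k})=\mathrm B_{n,p_{n,k}}(\{k,\ldots,n\})$. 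Linearity of the left branch with slope $c_{n,k}$ is immediate, and $[0,p_{n,k}]\supseteq[0,\tfrac{k-1}{n-1}]$ records the lower bound on the maximiser furnished by Lemma~\ref{Lem.psi}. For strict monotonicity I would use that the left branch has positive slope $c_{n,k}=\psi_{n,k}(p_{n,k})\ge\psi_{n,k}(1)=1$, and that $\frac{\dd}{\dd u}\mathrm B_{n,u}(\{k,\ldots,n\})=n\binom{n-1}{k-1}u^{k-1}(1-u)^{n-k}>0$ for $u\in(0,1)$; together with $f_{n,k}(0)=0$, $f_{n,k}(1)=\mathrm B_{n,1}(\{k,\ldots,n\})=1$, and continuity this gives the claimed bijection of $[0,1]$ onto $[0,1]$.

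\emph{Upper bounds.} With a $\mathrm B_{n,p}$-distributed variable $X$, Markov's inequality gives $\mathrm B_{n,p}(\{k,\ldots,n\})=\P(X\ge k)\le k^{-1}\E X=np/k$, so $\psi_{n,k}\le n/k$ on $[0,1]$ and $c_{n,k}\le n/k$. Since $c_{n,k}=\max\psi_{n,k}$, we also have $f_{n,k}(u)=u\,\psi_{n,k}(p_{n,k}\vee u)\le c_{n,k}u$, and combining this with $f_{n,k}(u)\le 1$ (from the first part) and $c_{n,k}\le n/k$ yields $f_{n,k}(u)\le 1\wedge(c_{n,k}u)\le 1\wedge(nu/k)$, as required.

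\emph{Lower bounds.} The crux is the left inequality in~\eqref{cnk.bounds}: that $\psi_{n,k}$ attains somewhere a value at least $\tfrac{n/k}{1+5k^{-1/3}}$. I would prove this by choosing a suitable test point, splitting by the size of $k$. For $k\ge 125$, put $q:=\min\{\tfrac kn(1+k^{-1/3}),\,1\}$. If $q=1$ then $n\le k+k^{2/3}$, hence $n/k\le 1+k^{-1/3}\le 1+5k^{-1/3}$ and already $\psi_{n,k}(1)=1$ suffices; otherwise a $\mathrm B_{n,q}$-distributed variable $X$ has mean $k+k^{2/3}$ and variance at most $k+k^{2/3}\le 2k$, so Chebyshev's inequality gives $\P(X\le k-1)\le 2k\,(k^{2/3})^{-2}=2k^{-1/3}$ (as $\E X-(k-1)\ge k^{2/3}$), whence
\[
 \psi_{n,k}(q) &\ge& q^{-1}(1-2k^{-1/3})
  \,\,=\,\, \frac nk\cdot\frac{1-2k^{-1/3}}{1+k^{-1/3}}
  \,\,\ge\,\, \frac{n/k}{1+5k^{-1/3}} ,
\]
the last inequality reducing to $(1-2s)(1+5s)\ge 1+s$ with $s:=k^{-1/3}\le\tfrac15$. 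For $k\le 125$ I would instead take $q=k/n$ and invoke the classical fact that a binomial law with integer mean $m$ satisfies $\P(\,\cdot\,\ge m)\ge\tfrac12$, so that $\psi_{n,k}(k/n)=\tfrac nk\,\mathrm B_{n,k/n}(\{k,\ldots,n\})\ge\tfrac n{2k}\ge\tfrac{n/k}{1+5k^{-1/3}}$ because $1+5k^{-1/3}\ge 2$ there. To pass to the left inequality in~\eqref{fnk.bound.new}: for $u\le p_{n,k}$ it reads $c_{n,k}u\ge\tfrac{1\wedge(nu/k)}{1+5k^{-1/3}}$, which is immediate from $c_{n,k}\ge\tfrac{n/k}{1+5k^{-1/3}}$; for $u\ge p_{n,k}$, where $f_{n,k}(u)=u\,\psi_{n,k}(u)$, I would use that $\psi_{n,k}$ is non-increasing on $[p_{n,k},1]$ (Lemma~\ref{Lem.psi}), so that $\psi_{n,k}(u)\ge\psi_{n,k}(q)\ge\tfrac{n/k}{1+5k^{-1/3}}$ while $u\le q$, and the tail bound $\mathrm B_{n,u}(\{k,\ldots,n\})\ge\mathrm B_{n,q}(\{k,\ldots,n\})\ge 1-2k^{-1/3}\ge\tfrac{1}{1+5k^{-1/3}}$ once $u\ge q$, distinguishing $u<k/n$ from $u\ge k/n$ and treating $k\le 125$ via the median fact in the same way.

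\emph{Expected main obstacle.} The hard part will be the lower bound on $c_{n,k}$: a single elementary test point does not deliver the factor $1+5k^{-1/3}$ for all $k$ simultaneously, which is what forces the Chebyshev-for-large-$k$ versus integer-mean-median-for-small-$k$ dichotomy, and one must check that the two ranges overlap — they do, meeting at $k=125$. Carrying this $c_{n,k}$-estimate over to $f_{n,k}$ on $[p_{n,k},1]$, where $f_{n,k}$ is no longer linear, is the second place where the shape information about $\psi_{n,k}$ from Lemma~\ref{Lem.psi} is genuinely needed.
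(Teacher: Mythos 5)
Your proposal is correct, and its structural part and upper bounds run along essentially the same lines as the paper (Markov's inequality for $c_{n,k}\le n/k$; monotonicity of the two branches of \eqref{Def.fnk.new} for the bijection claim). Where you genuinely diverge is the key lower bound $c_{n,k}\ge \frac{n/k}{1+5k^{-1/3}}$ and its transfer to $f_{n,k}$. The paper evaluates $\psi_{n,k}$ at the very same test point $p=(k+k^{2/3})/n$, but bounds the tail from below by the one-sided Chebyshev (Cantelli) inequality, $\mathrm{B}_{n,p}(\{k,\ldots,n\})\ge \frac{(np-k)^2}{np(1-p)+(np-k)^2}$, which after a one-line simplification gives $\psi_{n,k}(p)\ge \frac nk\big(1+k^{-1/3}+k^{-1/3}(1+k^{-1/3})^2\big)^{-1}\ge \frac{n/k}{1+5k^{-1/3}}$ uniformly in $k\ge1$; no split on the size of $k$ is needed (only the degenerate case $p>1$ is handled separately, as you also do). Your two-sided Chebyshev estimate loses roughly a factor $2$ in the tail and hence only covers $k\ge125$, which forces the appeal to the integer-mean-is-a-median property of the binomial for $k\le125$; that fact is true and classical (Jogdeo--Samuels, Kaas--Buhrman) and your two ranges do meet at $k=125$, but the fact is itself not elementary and would need a citation, so the paper's single Cantelli step is both shorter and more self-contained. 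For the transfer to $f_{n,k}$, the paper uses the representation $f_{n,k}(u)=u\max_{p\in[u,1]}\psi_{n,k}(p)$ from \eqref{Alt.rep.fnk}: for $u\le k/n$ the maximum over $[u,1]$ dominates the maximum over $[k/n,1]$, and for $u\ge k/n$ the left-hand side of \eqref{fnk.bound.new} is constant while $f_{n,k}$ is increasing, so the inequality persists. This replaces your four-way case analysis (comparing $u$ with $p_{n,k}$ and with $q$) by two lines; your cases do all close, but the increasing-versus-constant observation is the cleaner way to finish.
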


In \eqref{cnk.bounds} and \eqref{fnk.bound.new}, the main interest
is in the upper bounds for $c_{n,k}$ and $f_{n,k}$.
The probably improvable lower bounds show  that 
$f_{n,k}(u)$ is asymptotically equivalent to
its upper bound $1\wedge\frac {nu}k$, 
for $k\rightarrow \infty$, 
uniformly in $u$ and in $n\ge k$.   
A numerical computation of $p_{n,k}$, and hence of 
$c_{n,k}$ and $f_{n,k}(u)$, is straightforward due to the monotonicity
properties of $\psi_{n,k}$ given in Lemma~\ref{Lem.psi} below.  

\section{Examples} \label{Sec.Examples}
\subsection{} \label{Example.1}
This is a formalization of the fictitious example from the beginning
of Subsection~\ref{The.problem.etc}. With a notation slightly different
from that of Subsections~\ref{subsec.stats.notation} and~\ref{Sec:Main.result}
(here $\cX^N$ and $\cQ_0$, there $\cX$ and $\cP_0$), let $\cX$ be a sample
space and let $\cP_0\subseteq \Prob(\cX)$. Informally speaking, we want to 
test $\cP_0$ based on several but possibly dependent observables. 
We suppose we know how to handle the i.i.d.~case: For $m\in \N$,
let $U^{(m)}$ be a P-variable for the hypothesis 
$\{P^{\otimes m} : P\in\cP_0 \}$ on the sample space $\cX^m$. 
We further assume that we have $N=\sum_{j=1}^m N_j$
observations $X_{ji}$ coming in $m$ independent groups as
\[
 \big( (X^{}_{11}, \ldots , X^{}_{1N_1}) ,\ldots 
       (X^{}_{m1}, \ldots ,  X^{}_{mN_m}) \big)
\]
with possibly arbitrary dependencies within each group, so that 
our hypothesis can be formalized as 
\[
 \cQ_0 &:=& \left\{ \bigotimes_{j=1}^m Q_j \,:\, Q_j\in \Prob(\cX^{N_j}), 
   \text{ all $N$  $\cX$-marginals equal and $\in$ $\cP_0$}
 \right\}
\]
on the sample space $\cX^N$. Then randomly picking just one observable 
from each of the $m$ blocks and applying $U^{(m)}$ to these yields a valid
P-variable. Formally put,
\[
 K(x,\cdot)&:=&     
\Big( \bigtimes_{j=1}^m \{1,\ldots,N_j\}     \ni i\mapsto
(x_{1i_1},\ldots,x_{mi_m}) \Big) \im \bigotimes_{j=1}^m 
\mathrm{U}_{\{1,\ldots,N_j\}}
\]
for $x= \big( (x^{}_{11}, \ldots, x^{}_{1N_1}) ,\ldots, 
(x^{}_{m1}, \ldots, x^{}_{mN_m}) \big) \in\cX^N$ defines a P-kernel for
$\cQ_0$.  Repeating the random picking $n$ times amounts to considering
the kernel $K_n$ from $\cX^N$ to $[0,1]^n$ defined by 
$K_n(x,\cdot):= K(x,\cdot)^{\otimes n}$, and Theorem~\ref{Main.result}
shows how to transform it into a P-kernel.

\subsection{Primate associations}
\label{Example.drill} In~\cite{Drill.paper} we wanted
to test a hypothesis of ``no association'' between seven
primate species in a certain area (38 km$^2$ within a national 
park in Cameroon). The data, obtained by patrolling
3284 km over 217 days and recording the species composition of 
612 observed primate clusters, is a matrix
\[
  x &=& (x_{ij}) \,\,\in\,\,\{0,1\}^{612\times 7} \,\,\,=:\,\,\cX
\]
with
\[
 x_{ij} &:=& \left\{\begin{array}{ll}1\\ 0\end{array} \right\}
 \,\text{ if species $j$ is }\,
 \left\{\begin{array}{ll}\text{present} \\ \text{absent}\end{array} \right\}
 \text{ in cluster }i
\]

For example, the clusters observed on the first two and last two  observation
days  yielded
{\footnotesize

\begin{center}
\begin{minipage}{.9\linewidth}
\begin{Schunk}
\begin{Soutput}
          date putty mona redear crowned drill mangabey redcol
1   2006-02-03     1    0      1       0     0        0      0
2   2006-02-03     1    0      0       0     0        0      0
3   2006-02-03     0    0      1       1     0        0      1
4   2006-02-04     0    0      0       1     0        0      0
5   2006-02-04     1    0      1       0     0        0      0
6   2006-02-04     1    0      1       0     0        0      0
7   2006-02-04     0    0      0       0     0        1      0
8   2006-02-04     1    0      0       0     0        0      0
9   2006-02-04     1    0      0       1     0        0      1
607 2008-01-17     1    0      1       0     0        1      0
608 2008-01-17     1    0      0       0     0        0      0
609 2008-01-17     1    0      1       1     0        0      1
610 2008-01-17     1    1      0       0     0        0      0
611 2008-01-18     0    1      0       1     0        0      0
612 2008-01-18     1    0      0       1     0        0      0
\end{Soutput}
\end{Schunk}
\end{minipage}
\end{center}}
\noindent Here the columnnames starting from ``putty''  are abbreviations for English
species names, with  the corresponding scientific names obtainable 
from~\cite[p.~130, Fig.~2]{Drill.paper}. For example, ``putty'' is 
an abbreviation for ``putty-nosed guenon'' or ``\textit{Cercopithecus nictitans}''.
    
We write  $x_{\cdot+}$ and  $x_{+\cdot}$ for
the  row and column sums vectors  of $x\in\cX$. 
The ``no association'' hypothesis is formalized as
\[
  \cP_0 &:=& \big\{P\in\Prob( \cX) : P(\{x\})=P(\{y\})\text{ if }
          x_{\cdot+} = y_{\cdot+}  \text{ and }x_{+\cdot}= y_{+\cdot}\big\}
\]
This formalization is common in ecology and in social network analysis,
see~\cite{Adlerrochen} for some references,
it is certainly debatable,  but it is for instance  larger,
and hence more interesting to reject,
than  assuming that rows are independent and each represents a 
random sample from the species' which is simple  
(in the usual sense of all samples being equally likely)
if conditioned on its possibly random size.  
A submodel of $\cP_0$, consisting of all laws of 
$\cX$-valued random variables $X=(X_{ij})$ with  independent
indicators $X_{ij}$  with success
probabilities $\alpha_i\beta_j/(1+\alpha_i\beta_j)$
for some $\alpha_i,\beta_j\in{[0,\infty[}$,
was originally proposed by Rasch for a completely different 
situation, see~\cite[page 75]{Rasch}.

Let $T:\cX \rightarrow \R$ be a statistic to be used for 
testing $\cP_0$, with sufficiently large values to be regarded as significant.
The corresponding P-variable $U$ appears impossible to compute, but 
Besag and Clifford  constructed in~\cite{Besag.Clifford.1989}
a P-kernel for  $\cP_0$ using Markov chains 
as follows:

Fix a length $N\in\N$. Given the observation $x\in\cX$, let 
$\tau$ be uniformly distributed  on $\{1,\ldots,N\}$, then let 
$X_\tau := x$, then construct $X_{\tau+1},\ldots,X_N$ as a Markov chain 
starting from $X_\tau $ and using stationary transition probabilities 
with the uniform distribution on 
$\cX(x):=\{y\in\cX: y_{\cdot+} = x_{\cdot+}\text{ and }y_{+\cdot}=x_{+\cdot}\}$
as their stationary distribution, then analogously construct
$X_{\tau-1},\ldots, X_1$ using the corresponding reversed transition
probabilities. Then take the observed value of 
\[ \frac 1N \sharp\{ t : T(X_\tau) \le T(X_t) \}\] 
as our P-value. This is modelled by a valid 
P-variable, on a suitably extended sample space,  since $\tau$ and the 
chain $(X_1,\ldots,X_N)$ are independent under  the hypothesis $\cP_0$.
See~\cite{Besag.Clifford.1989} and~\cite{Adlerrochen} 
for more details.

With a certain $T$ and with $N= 10^8$, the above 
yielded a P-value of $0.006$ for our data.
(This may  be regarded as a P-value obtained with a 
practically, though not theoretically, 
deterministic P-kernel, as running a few repetitions
of the described Besag-Clifford procedure showed.) 
But perhaps the null model we then wanted  to reject 
is inappropriate not due to an interesting
scientific reason but, for example,  due to a possible reobservation
of the same primate cluster on the same day. Hence we 
reanalyzed the data by 
randomly picking  just one cluster per day. Doing this $n=1000 $ times 
yielded P-values with sample quartiles $0.01$, $0.03$, $0.09$, maximum
$0.7$, and the following histogram:

\begin{center}
%
%
\vspace{-5em}
\includegraphics[width=\linewidth,
  keepaspectratio
  ]{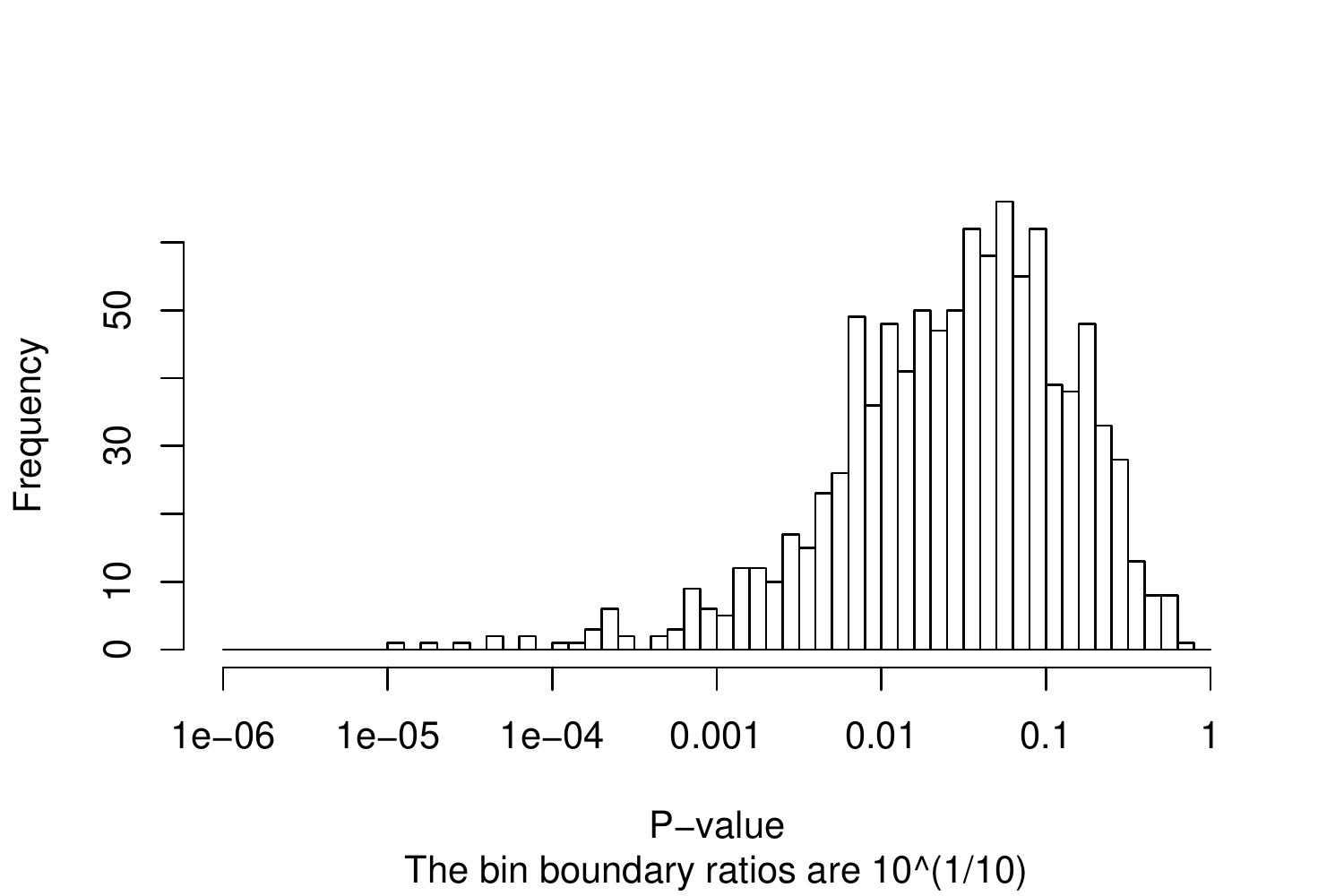}%
\end{center}
Now what to report as an appropriate summary P-value? 
With the mathematical formalization analogous to Example~\ref{Example.1},
we decided to use Theorem~\ref{Main.result} with $k=n/2$, yielding $c_{n,k} =1.846$,
and we thus reported   $0.03\times1.846 = 0.06$ 
in  \cite[Table II, last column]{Drill.paper}.

\subsection{Testing Hardy-Weinberg equilibrium using complex survey data} 
In~\cite{Li.Graubard.2009}, Li and Graubard
propose certain tests for  Hardy-Weinberg equilibrium and related hypotheses 
for humans,  based on data obtained with certain  complex survey designs,
often leading to samples with indivuals belonging to the same 
household. To address the dependence problem due to people
living in the same household possibly being   blood related,
they ``selected one person randomly from each household
to form a subsample''. Their  ``analyses using 10 subsamples 
produce(d) consistent results \ldots and in general agreed
with results using the full sample analyses''.   
See \cite[pages 1103, 1101]{Li.Graubard.2009}. 
So here the P-variable combination problem
is not properly addressed, at least not explicitly.
Perhaps one could interpret ``produce(d) consistent results'' 
as looking for the maximum of the P-variables and multiplying
it by the then trivial correction constant $c_{10,10}=1$, 
but such a procedure should  have been fixed 
before looking at the 10 actual P-values.

\section{Auxiliary results and  proofs}\label{Sec.Aux.Proofs}
As in Subsection \ref{Sec:Main.result}, let $n\in\N$ and $k\in \{1,\ldots,n\}$
be fixed. In Lemma~\ref{lem.binom} and in the proofs of Lemma~\ref{Lem.psi}
and Proposition~\ref{fnk-properties},
we write 
\[
 \phi(p)&:=&\phi_{n,k}(p)\,\,\,:=\,\,\, \mathrm{B}_{n,p}(\{k,\ldots,n\}) 
\]
for $p\in[0,1]$, so that $\psi_{n,k}(p)= p^{-1}\phi(p)$,
with the right hand side  defined by continuity for $p=0$.
We recall that $\frac{k-1}{n-1}:=0$ in case of $k=n=1$. 
 
\begin{lem}\label{lem.binom}
The function $\phi$ is continuous, strictly increasing, strictly convex 
on $[0,\frac{k-1}{n-1}]$,  concave on $[\frac{k-1}{n-1},1]$,
and satisfies $\phi(0)=0$.
\end{lem}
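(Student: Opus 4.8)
The plan is to base everything on an explicit formula for the derivative $\phi'$. Writing $\phi(p)=\sum_{j=k}^{n}\binom{n}{j}p^{j}(1-p)^{n-j}$, differentiating termwise, and using $j\binom{n}{j}=n\binom{n-1}{j-1}$ and $(n-j)\binom{n}{j}=n\binom{n-1}{j}$, the two resulting sums telescope (after the shift $i=j-1$ in the first and noting $\binom{n-1}{n}=0$ in the second) to the single surviving term
\[
 \phi'(p) &=& n\,\mathrm{b}_{n-1,p}(k-1)
  \,\,\,=\,\,\,\frac{n!}{(k-1)!\,(n-k)!}\,p^{k-1}(1-p)^{n-k}
 \qquad\text{for }p\in[0,1]
\]
(equivalently, this is the incomplete-Beta representation $\phi(p)=\frac{n!}{(k-1)!\,(n-k)!}\int_0^p t^{k-1}(1-t)^{n-k}\,\dd t$ read off by the fundamental theorem of calculus). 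Continuity of $\phi$ and $\phi(0)=0$ are immediate from the polynomial form. Since the right-hand side above is $>0$ for $p\in{]0,1[}$, and $\phi$ is continuous on $[0,1]$, $\phi$ is strictly increasing on $[0,1]$.

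For the curvature claims I would first peel off the two boundary cases in which the second-derivative formula below would involve a negative exponent: if $k=1$ then $\phi(p)=1-(1-p)^{n}$ is concave on $[0,1]=[\tfrac{k-1}{n-1},1]$, with $[0,\tfrac{k-1}{n-1}]=\{0\}$; if $k=n$ then $\phi(p)=p^{n}$ is strictly convex on $[0,1]=[0,\tfrac{k-1}{n-1}]$, with $[\tfrac{k-1}{n-1},1]=\{1\}$; together with the convention $\tfrac00:=0$ these also settle $n=1$. So assume $2\le k\le n-1$, hence $n\ge3$. Differentiating the displayed formula once more gives
\[
 \phi''(p) &=& \frac{n!}{(k-1)!\,(n-k)!}\,p^{k-2}(1-p)^{n-k-1}\bigl((k-1)-(n-1)p\bigr)
 \qquad\text{for }p\in[0,1]
\]
where now $k-2\ge0$ and $n-k-1\ge0$, so the factor $p^{k-2}(1-p)^{n-k-1}$ is $\ge0$ on $[0,1]$ and $>0$ on ${]0,1[}$. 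Hence on ${]0,1[}$ the sign of $\phi''$ agrees with that of $(k-1)-(n-1)p$, which is $>0$ for $p<\tfrac{k-1}{n-1}$ and $<0$ for $p>\tfrac{k-1}{n-1}$; since $\phi$ is $C^2$ on $[0,1]$ this yields strict convexity on $[0,\tfrac{k-1}{n-1}]$ and concavity on $[\tfrac{k-1}{n-1},1]$.

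I do not expect a genuine obstacle; the only delicate points are the index bookkeeping in the telescoping sum and, more importantly, tracking which exponents may vanish, which forces the cases $k\in\{1,n\}$ (and $n=1$) to be handled separately before one differentiates twice. As an alternative to computing $\phi''$, one could instead observe that $\phi'$ is, up to a positive constant, the Beta density $t\mapsto t^{k-1}(1-t)^{n-k}$, whose logarithmic derivative $\tfrac{k-1}{t}-\tfrac{n-k}{1-t}$ on ${]0,1[}$ is strictly decreasing and changes sign from $+$ to $-$ exactly at $t=\tfrac{k-1}{n-1}$; thus $\phi'$ is strictly increasing on $[0,\tfrac{k-1}{n-1}]$ and decreasing on $[\tfrac{k-1}{n-1},1]$, which is the assertion.
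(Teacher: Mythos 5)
Your proposal is correct and follows essentially the same route as the paper: compute $\phi'(p)=n\,\mathrm{b}_{n-1,p}(k-1)$ and then read off the sign of $\phi''$ from the factor $(k-1)-(n-1)p$, which is exactly the paper's argument (the paper simply declares the conclusions ``obvious'' at that point). Your extra bookkeeping for the boundary cases $k\in\{1,n\}$ and $n=1$ is a harmless, slightly more careful elaboration of the same idea.
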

\begin{proof}For $p\in{]0,1[}$, we have 
$\phi'(p)=n\,\mathrm{b}_{n-1,p}(k-1)$ and  
\la     \label{phi.2.Strich}
 \phi''(p) &=& \frac{n\,\mathrm{b}_{n-1,p}(k-1)}{p\,(1-p)}
 \big(k-1-(n-1)p\big)
\al
and everything is obvious.
\end{proof}

\begin{lem}\label{Lem.psi}
We have 
$\psi_{1,1}(p)=1$ for $p\in[0,1]$, while for $n\ge 2$,
$p_{n,k} = \mathrm{argmax\,} \psi_{n,k}$ exists and belongs to $[\frac{k-1}{n-1},1]$,
with $\psi_{n,k}$ strictly increasing on $[0,p_{n,k}]$
and strictly decreasing on $[p_{n,k},1]$. We further have
\la     \label{Alt.rep.fnk}
    f_{n,k}(u) &=& u \max_{p\in[u,1]} \psi_{n,k}(p)\qquad\text{ for }u\in[0,1]
\al
\end{lem}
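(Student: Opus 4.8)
The case $n=k=1$ is immediate, since then $\phi(p)=\mathrm{B}_{1,p}(\{1\})=p$, so $\psi_{1,1}\equiv 1$ and \eqref{Alt.rep.fnk} reduces to $u=u\cdot 1$; so assume $n\ge 2$ throughout. The plan is to control the sign of $\psi_{n,k}'$ on ${]0,1[}$ through the polynomial $g(p):=p\,\phi'(p)-\phi(p)$. Since $\psi_{n,k}=\phi/p$ on $]0,1]$, one has $\psi_{n,k}'(p)=g(p)/p^{2}$ there; moreover $g(0)=-\phi(0)=0$ and, differentiating, $g'(p)=p\,\phi''(p)$. Now the explicit formula \eqref{phi.2.Strich}, together with $\mathrm{b}_{n-1,p}(k-1)>0$ for $p\in{]0,1[}$, shows that $\phi''>0$ on ${]0,\frac{k-1}{n-1}[}$ and $\phi''<0$ on ${]\frac{k-1}{n-1},1[}$; hence $g$ is strictly increasing on $[0,\frac{k-1}{n-1}]$ and strictly decreasing on $[\frac{k-1}{n-1},1]$, and in particular $g>0$ on ${]0,\frac{k-1}{n-1}]}$ when $k\ge2$. (When $k=1$ the first of these intervals is $\{0\}$; when $k=n$ the second one is $\{1\}$.)

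Next I would locate the sign change of $g$ via its right endpoint: $g(1)=\phi'(1)-\phi(1)=n\,\mathrm{b}_{n-1,1}(k-1)-1$, which equals $n-1>0$ if $k=n$ and $-1<0$ if $k<n$. If $k=n$, then $\frac{k-1}{n-1}=1$, so $g$ is strictly increasing on all of $[0,1]$ with $g(0)=0$, whence $g>0$ on $]0,1]$; put $p_{n,n}:=1$. If $k<n$, then on $[\frac{k-1}{n-1},1]$ the strictly decreasing $g$ runs from $g(\frac{k-1}{n-1})\ge0$ down to $g(1)=-1<0$, so it has a unique zero $p_{n,k}\in[\frac{k-1}{n-1},1)$ (equal to $\frac{k-1}{n-1}=0$ when $k=1$); combined with $g\ge0$ on $[0,\frac{k-1}{n-1}]$ this yields $g>0$ on ${]0,p_{n,k}[}$, $g(p_{n,k})=0$, and $g<0$ on ${]p_{n,k},1[}$. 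In every case $\psi_{n,k}'>0$ on ${]0,p_{n,k}[}$ and $\psi_{n,k}'<0$ on ${]p_{n,k},1[}$, so, since $\psi_{n,k}$ is continuous on $[0,1]$, it is strictly increasing on $[0,p_{n,k}]$ and strictly decreasing on $[p_{n,k},1]$. This gives the existence and uniqueness of $p_{n,k}=\mathrm{argmax\,}\psi_{n,k}$ and its location $p_{n,k}\in[\frac{k-1}{n-1},1]$.

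It then remains to deduce \eqref{Alt.rep.fnk}. Fix $u\in[0,1]$. If $u\le p_{n,k}$, then $p_{n,k}\vee u=p_{n,k}$, which lies in $[u,1]$ and is the global maximizer of $\psi_{n,k}$, so $\psi_{n,k}(p_{n,k}\vee u)=\max_{p\in[u,1]}\psi_{n,k}(p)$; if $u>p_{n,k}$, then $p_{n,k}\vee u=u$ and $\psi_{n,k}$ is strictly decreasing on $[u,1]\subseteq[p_{n,k},1]$, so again $\psi_{n,k}(p_{n,k}\vee u)=\max_{p\in[u,1]}\psi_{n,k}(p)$. Multiplying by $u$ and recalling $f_{n,k}(u)=u\,\psi_{n,k}(p_{n,k}\vee u)$ from \eqref{Def.fnk.new} gives \eqref{Alt.rep.fnk}, the case $n=k=1$ being covered by $\psi_{1,1}\equiv1$. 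The one genuine idea here is the identity $g'=p\,\phi''$, which converts the convexity/concavity dichotomy of Lemma~\ref{lem.binom} directly into monotonicity of $g$; the only thing demanding real care afterwards is the bookkeeping of the degenerate endpoint cases $k=1$ and $k=n$, where $[0,p_{n,k}]$ or $[p_{n,k},1]$ collapses to a single point.
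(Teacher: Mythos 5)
Your argument is correct and essentially the paper's own: both reduce the monotonicity claims to the sign of $\psi_{n,k}'(p)=p^{-2}\big(p\,\phi'(p)-\phi(p)\big)$ and exploit the identity $\big(p\,\phi'-\phi\big)'=p\,\phi''$ together with Lemma~\ref{lem.binom}/\eqref{phi.2.Strich}, then read off \eqref{Alt.rep.fnk} from \eqref{Def.fnk.new}. The only difference is bookkeeping: you locate the sign change by evaluating $g(1)$ and applying the intermediate value theorem, while the paper gets the increase on $[0,\frac{k-1}{n-1}]$ from the secant-slope interpretation of $\psi_{n,k}$ and handles the rest via an ``at most one zero'' argument.
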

\begin{proof}
The case of $n=1$ is trivial, so let us assume $n\ge2$.
Let us abbreviate $\psi:=\psi_{n,k}$.
On  $[0,\frac{k-1}{n-1}]$, 
the function $\phi$ is strictly convex by Lemma~\ref{lem.binom},
hence $\psi(p) = \frac{\phi(p)-\phi(0)}{p-0}$ is strictly increasing there.
For $p\in{]0,1]}$, we have 
\[
 \psi'(p) &=& p^{-2}\,\big(p\,\phi'(p) -\phi(p) \big) 
 \,\,\,=:\,\,\, p^{-2}\,\omega(p)\\
 \omega'(p) &=& p\,\phi''(p) \,\,\,=\,\,\,
 \frac{n\,\mathrm{b}_{n-1,p}(k-1)}{1-p} \big(k-1-(n-1)p\big)
\] 
by~\eqref{phi.2.Strich}.
So $\omega'(p) <0$ for $p\in{]\frac{k-1}{n-1}, 1[}$.
Hence  there is at most one $p_0\in{]\frac{k-1}{n-1}, 1[}$
with $\psi'(p_0)=0$, and if such a $p_0$ exists, then 
$\psi'$ changes sign from plus to minus at $p_0$.
Thus  $p_{n,k}=p_0$  if $p_0$ as above exists, and 
otherwise $\psi$ is strictly monotone 
on $\left[\frac{k-1}{n-1},1\right]$
and 
$p_{n,k}$ is the appropriate element of $\{\frac{k-1}{n-1}, 1\}$.
This proves all claims about $\psi$, and~\eqref{Alt.rep.fnk}
follows, see the by now justified definition in~\eqref{Def.fnk.new}.
\end{proof}

\begin{lem} We have
\la
 \sup_{p\,\in\,{\left]0,1\right]}}\, \frac{k}{np}\mathrm{B}_{n,p}(\{k,\ldots,n\})     
  &\le&  1   \label{Bin.Markov}     \\
 \max_{p\,\in\,\left[\frac kn,1\right]}
 \,\frac{k}{np}\mathrm{B}_{n,p}(\{k,\ldots,n\}) 
  &\ge&  \left(1+ 5\,k^{-1/3}\right)^{-1}   \label{One-sided.Cheb.Markov}  
\al
\end{lem}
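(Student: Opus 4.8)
The plan is to prove the two displayed inequalities separately: \eqref{Bin.Markov} is Markov's inequality in disguise, while \eqref{One-sided.Cheb.Markov} is a quantitative lower bound which I would obtain by exhibiting one well-chosen value of $p$ and applying the one-sided Chebyshev (Cantelli) inequality there.

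For \eqref{Bin.Markov}: let $S$ be $\mathrm{B}_{n,p}$-distributed with $p\in{]0,1]}$. Then $k\,\mathrm{B}_{n,p}(\{k,\ldots,n\})=k\sum_{j=k}^{n}\mathrm{b}_{n,p}(j)\le\sum_{j=k}^{n}j\,\mathrm{b}_{n,p}(j)\le\E S=np$, and dividing by $np$ and passing to the supremum over $p$ gives the claim. (In the notation of Subsection~\ref{Sec:Main.result} this is exactly $c_{n,k}\le\frac nk$, the upper bound in \eqref{cnk.bounds}.)

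For \eqref{One-sided.Cheb.Markov}: the idea is to inflate the mean slightly above $k$, trading off the factor $\frac{k}{np}$ (which wants $np$ close to $k$) against $\mathrm{B}_{n,p}(\{k,\ldots,n\})=\P(S\ge k)$ (which wants $np$ well above $k$). Concretely I would set $s:=k^{2/3}$ and distinguish two cases. If $k+k^{2/3}\le n$, put $p:=\frac{k+k^{2/3}}{n}\in[\frac kn,1]$, so that $S\sim\mathrm{B}_{n,p}$ has mean $\mu=np=k+k^{2/3}$ and variance $\sigma^2=np(1-p)\le k+k^{2/3}$; Cantelli's inequality with $t:=k^{2/3}+1$, chosen so that $\mu-t=k-1$, yields $\mathrm{B}_{n,p}(\{0,\ldots,k-1\})=\P(S\le\mu-t)\le\frac{\sigma^2}{\sigma^2+t^2}\le\frac{k+k^{2/3}}{k+k^{2/3}+(k^{2/3}+1)^2}$, hence $\mathrm{B}_{n,p}(\{k,\ldots,n\})\ge\frac{(k^{2/3}+1)^2}{k+k^{2/3}+(k^{2/3}+1)^2}$, and therefore $\frac{k}{np}\mathrm{B}_{n,p}(\{k,\ldots,n\})\ge\frac{k(k^{2/3}+1)^2}{(k+k^{2/3})\big(k+k^{2/3}+(k^{2/3}+1)^2\big)}$. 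Writing $a:=k^{-1/3}\in{]0,1]}$ and clearing denominators, the desired inequality $\ge(1+5a)^{-1}$ becomes $(1+2a^2+a^4)(1+5a)\ge(1+a)(1+a+3a^2+a^4)$, and a short computation shows the difference of the two sides equals $a\,(3-2a+7a^2+4a^4)$, which is $\ge0$ for $a\in[0,1]$ since $3-2a\ge1$ there. In the remaining case $n<k+k^{2/3}$ I would simply take $p=1$: then $\mathrm{B}_{n,1}(\{k,\ldots,n\})=1$ and $\frac kn>\frac{k}{k+k^{2/3}}=(1+k^{-1/3})^{-1}\ge(1+5k^{-1/3})^{-1}$. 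Since $p\mapsto\frac{k}{np}\mathrm{B}_{n,p}(\{k,\ldots,n\})$ is continuous on the compact set $[\frac kn,1]$, the two cases together give \eqref{One-sided.Cheb.Markov}.

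The main obstacle is the calibration $s=k^{2/3}$: it must be large enough that $\P(S\ge k)=1-O(\sigma^2/t^2)=1-O(k^{-1/3})$ and at the same time small enough that $\frac{k}{k+s}=1-O(k^{-1/3})$, which forces $s\asymp k^{2/3}$ and is precisely what produces the rate $k^{-1/3}$ in the bound; everything after that is elementary, although one must still check that the generous constant $5$ — rather than a larger one — survives the final polynomial inequality (it does, with room, consistent with the remark after Proposition~\ref{fnk-properties} that these lower bounds are probably improvable) and must not forget the degenerate case $k+k^{2/3}>n$, where the Cantelli step collapses and $p=1$ is used instead.
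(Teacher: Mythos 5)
Your proof is correct and follows essentially the same route as the paper: Markov's inequality for \eqref{Bin.Markov}, and for \eqref{One-sided.Cheb.Markov} the one-sided Chebyshev (Cantelli) inequality evaluated at the same point $p=(k+k^{2/3})/n$, with the same fallback $p=1$ when $k+k^{2/3}>n$. The only (immaterial) difference is that you apply Cantelli to $\P(S\le k-1)$ with $t=k^{2/3}+1$ and bound $\sigma^2\le np$, whereas the paper bounds $\mathrm{B}_{n,p}(\{k,\ldots,n\})\ge 1-\mathrm{B}_{n,p}(\{0,\ldots,k\})$ with $t=np-k$ and then simplifies; both yield the constant $5$.
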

\begin{proof} Relation \eqref{Bin.Markov}, 
following from 
$k\,\mathrm{B}_{n,p}(\{k,\ldots,n\})\le\sum_{j=0}^n j\,\mathrm{b}_{n,p}(j)$,
is a simple Markov inequality. 
To prove \eqref{One-sided.Cheb.Markov},
let $g(p) := \frac{k}{np}\mathrm{B}_{n,p}(\{k,\ldots,n\}) $
for $p\in  \left[\frac kn,1\right]$.
If also $p\neq \frac kn$, then 
$k < \mu(\mathrm{B}_{n,p})$
and the one-sided Chebyshev inequality 
\cite[page~476, (3.2)]{Karlin.Studden} yields 
\[
 \mathrm{B}_{n,p}(\{k,\ldots,n\}) &\ge& 1- \mathrm{B}_{n,p}(\{0,\ldots,k\})
 \,\,\,\ge \,\,\, \frac{(np-k)^2}{np(1-p) + (np-k)^2} 
\]
and hence
\[
 g(p)  &\ge& 
 \left( \frac{np}k +
   \frac{p^2(1-p)}{k\left(p-\frac{k}{n}\right)^2}\right)^{-1}
 \,\,\,\ge\,\,\,
 \left( \frac{np}k +
   \frac{p^2}{k\left(p-\frac{k}{n}\right)^2}\right)^{-1} 
   \,\,\,=:\,\,\, h(p)
\]
If now in particular $p := (k+k^{2/3})/n$ satisfies $p\le 1$,
then we get 
\[
 h(p)  &=& \left(1+k^{-1/3} + k^{-1/3}\,( 1+k^{-1/3})^2 \right)^{-1}
   \,\,\,\ge \,\,\, (1+5\,k^{-1/3})^{-1}
\]
and otherwise we have 
$\frac kn > (1+k^{-1/3})^{-1}_{}$
and hence 
$g(1)= \frac kn > (1+5\,k^{-1/3})^{-1}_{}$. 
Thus \eqref{One-sided.Cheb.Markov} holds in every case.
\end{proof}

\begin{proof}[Proof of Proposition~\ref{fnk-properties}]
Let us abbreviate $f:=f_{n,k}$.
By the definition in~\eqref{Def.fnk.new}, $f$ is positive
and continuous, since $\psi_{n,k}$ is so, and also $f(0)=0$.
By the second representation in~\eqref{Def.fnk.new},
$f$ is strictly increasing, using $c_{n,k}>0$ and $\phi$
strictly increasing, and we have $f(1)=1$.
With  $p_{n,k} \ge \frac{k-1}{n-1}$ from Lemma~\ref{Lem.psi},
this proves the claimed mapping properties of $f$.

Inequalities~\eqref{cnk.bounds} follow from~\eqref{Bin.Markov} 
and~\eqref{One-sided.Cheb.Markov}.
For $u\le \frac{k}{n}$, inequality \eqref{One-sided.Cheb.Markov} 
also yields the first inequality in~\eqref{fnk.bound.new}, since 
then~\eqref{Alt.rep.fnk} shows that  $f(u)$ is at least
$\frac {nu}k$ times the left hand side of~\eqref{One-sided.Cheb.Markov},
while for $u\ge \frac{k}{n}$ the inequality to be proved persists, since
the left hand side is constant and the right hand side increasing.
Next, $f(u)\le c_{n,k}u$  for $u\in[0,1]$
follows from $\psi_{n,k}(p_{n,k}\vee u) \le c_{n,k}$,
and $f(u)\le 1$ was proved above.
The final inequality in~\eqref{fnk.bound.new} follows
from~\eqref{cnk.bounds}.
\end{proof}

Now let us recall the definition and a well-known lemma concerning 
the convex order: 
For $P,Q\in\Prob(\R)$ with finite means $\mu(P)$ and $\mu(Q)$,
one writes $P\cxle Q$ if $\int \phi \dd P \le \int \phi \dd Q$  for every convex
function $\phi:\R \rightarrow {]{-\infty},\infty]}$. Then necessarily $\mu(P)=\mu(Q)$.

\begin{lem}\label{lem.convex.order}
Let $a,b\in\R$ with $a<b$ and let $P\in\Prob([a,b])$ with mean $\mu$.
Then $\delta_\mu \cxle P \cxle \frac{b-\mu}{b-a}\delta_a 
+ \frac{\mu-a}{b-a}\delta_b$.  
\end{lem}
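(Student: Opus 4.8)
The plan is to establish the two convex-order comparisons in Lemma~\ref{lem.convex.order} separately, both via the defining inequality for $\cxle$ applied to an arbitrary convex $\phi:\R\rightarrow{]{-\infty},\infty]}$, noting that convexity on $\R$ forces $\phi$ to be finite and continuous on the open interval containing $[a,b]$ so there is no integrability obstruction.

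For the lower bound $\delta_\mu\cxle P$, I would invoke Jensen's inequality: since $\mu=\mu(P)=\int x\,P(\dd x)$ lies in $[a,b]$ and $\phi$ is convex, $\phi(\mu)=\phi\!\left(\int x\,P(\dd x)\right)\le\int\phi(x)\,P(\dd x)$, which is exactly $\int\phi\dd\delta_\mu\le\int\phi\dd P$. For the upper bound, write $Q:=\frac{b-\mu}{b-a}\delta_a+\frac{\mu-a}{b-a}\delta_b$; one checks immediately that $Q$ has mean $\mu$, so the means already agree. The key pointwise fact is that for every $x\in[a,b]$,
\[
 \phi(x) &\le& \frac{b-x}{b-a}\,\phi(a) + \frac{x-a}{b-a}\,\phi(b),
\]
i.e.\ the graph of $\phi$ on $[a,b]$ lies below the chord joining $(a,\phi(a))$ and $(b,\phi(b))$; this is just convexity, since $x=\frac{b-x}{b-a}a+\frac{x-a}{b-a}b$ is the corresponding convex combination of the endpoints. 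Integrating this inequality with respect to $P$ and using $\int x\,P(\dd x)=\mu$ gives
\[
 \int\phi\dd P
  &\le& \frac{b-\mu}{b-a}\,\phi(a) + \frac{\mu-a}{b-a}\,\phi(b)
  \,\,\,=\,\,\, \int\phi\dd Q,
\]
which is the claimed relation $P\cxle Q$.

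There is essentially no hard step here; the only point requiring a word of care is domain/integrability, namely that a convex function $\phi:\R\rightarrow{]{-\infty},\infty]}$ is automatically real-valued and (locally Lipschitz, hence) bounded and Borel measurable on the compact interval $[a,b]$, so all the integrals above are well defined and finite and the chord inequality may be integrated term by term. Once that is observed, both halves follow from Jensen's inequality and the chord characterisation of convexity respectively, and the equality of means is the standard consequence of $\cxle$ recorded just before the lemma.
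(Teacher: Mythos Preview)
Your argument is correct and essentially identical to the paper's: the paper integrates the pointwise sandwich
\[
 \phi(\mu)+\phi'(\mu+)(t-\mu)\ \le\ \phi(t)\ \le\ \tfrac{b-t}{b-a}\phi(a)+\tfrac{t-a}{b-a}\phi(b)
\]
against $P$, so your appeal to Jensen for the left comparison is just the usual repackaging of the tangent-line step, and your chord argument for the right comparison is exactly the paper's. One small correction: your claim that convexity on $\R$ forces $\phi$ to be finite on $[a,b]$ is not true for $\phi:\R\to{]{-\infty},\infty]}$ (take $\phi=+\infty$ on ${]a,\infty[}$ and $0$ elsewhere), but this does not damage the proof, since the chord inequality and Jensen both remain valid as inequalities in ${]{-\infty},\infty]}$ and the integrated statements then hold trivially when the larger side is $+\infty$.
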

\begin{proof}This is trivial if $\mu\in\{a,b\}$. Otherwise,  
for $\phi$ convex on $[a,b]$, apply $\int  \ldots \dd P(t)$
to $\phi(\mu)+\phi'(\mu+)(t-\mu)\le\phi(t)\le\frac{b-t}{b-a}\phi(a)
+\frac{t-a}{b-a}\phi(b)$. 
\end{proof}

\begin{lem}\label{lem.Argmax}
Let $\alpha,c\in[0,1]$ and let $\phi:[0,1]\rightarrow\R$
be a continuous and increasing function, convex on $[0,c]$ and concave 
on $[c,1]$.  Then the set
\la      \label{Argmax.phi.alpha}
 \mathrm{Argmax\,} \Big(\int_{[0,1]} \phi\dd P :   
 P\in\Prob([0,1])\text{ \rm with }\mu(P)\le\alpha   \Big)
\al
has nonempty intersection with 
\la\label{Argmax.phi.alpha.cand}
 \left\{ \big(1-\frac{\alpha}t\big)\delta_0 +\frac{\alpha}t\delta_t 
  \,:\, t\in[\alpha\vee c,1] \right\}
\al
\end{lem}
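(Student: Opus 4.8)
\textbf{Proof proposal for Lemma~\ref{lem.Argmax}.}
The plan is to maximize $\int_{[0,1]}\phi\dd P$ over the convex set $\{P\in\Prob([0,1]):\mu(P)\le\alpha\}$ by a two-stage reduction, each stage justified by Lemma~\ref{lem.convex.order} applied on a suitable subinterval where $\phi$ is convex. First I would dispose of the trivial cases $\alpha=0$ (then $P=\delta_0$ is forced) and $c\le\alpha$ (then, since $\phi$ is increasing, pushing all mass as far right as the mean constraint permits can only help, and on $[c,1]$, where $\phi$ is concave, $\mu(P)=\alpha$ is optimal; more carefully one argues directly that $(1-\alpha)\delta_0+\alpha\delta_1$ or rather the candidate with $t=1$ works, but I will fold this into the general argument). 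So assume $0<\alpha<c$.

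The core argument is as follows. Fix any competitor $P$ with $\mu(P)=:m\le\alpha$; we may assume $m=\alpha$, since replacing $P$ by a mixture $\theta P+(1-\theta)\delta_0$ that restores mean $\alpha$ only increases the integral when $m<\alpha$, because $\phi$ is increasing and $\phi(0)\le\phi(t)$ for all $t$ --- actually the cleanest route is: if $m<\alpha$ then some mass of $P$ can be transported rightward (within $[0,1]$) to raise the mean to $\alpha$, and this weakly increases $\int\phi\dd P$ since $\phi$ is increasing. Now with $\mu(P)=\alpha$, split $P$ according to whether the sample point lies in $[0,c]$ or in $[c,1]$: write $P=\lambda P_1+(1-\lambda)P_2$ with $P_1$ supported on $[0,c]$ and $P_2$ on $[c,1]$ (this is possible up to an irrelevant ambiguity at the atom $c$). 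On $[c,1]$ the function $\phi$ is concave, so by the lower bound in Lemma~\ref{lem.convex.order} (applied with $a=c$, $b=1$, to the measure $P_2$) we have $\int\phi\dd P_2\le\phi(\mu(P_2))$; hence replacing $P_2$ by $\delta_{\mu(P_2)}$ weakly increases the integral while preserving $\mu(P)=\alpha$. We are left with $P=\lambda P_1+(1-\lambda)\delta_s$ for some $s\in[c,1]$ and $P_1\in\Prob([0,c])$. On $[0,c]$, $\phi$ is convex, so the upper bound in Lemma~\ref{lem.convex.order} applied to $P_1$ (with $a=0$, $b=c$) lets us replace $P_1$ by a two-point measure on $\{0,c\}$ with the same mean, again weakly increasing $\int\phi\dd P$. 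Thus the supremum is attained (continuity of $\phi$ plus weak-$*$ compactness of the constraint set guarantees attainment, or one simply notes the reductions land in a compact finite-dimensional family) by some measure $Q$ supported on $\{0,c,s\}$ with $s\in[c,1]$ and $\mu(Q)=\alpha<c\le s$.

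It remains to collapse the three-point $Q=a_0\delta_0+a_c\delta_c+a_s\delta_s$ to a two-point measure of the form in~\eqref{Argmax.phi.alpha.cand}. Here I would argue that among all $Q$ supported on $\{0,c,s\}$ with fixed mean $\alpha$ and fixed $s$, the extreme points of that (one-dimensional) family are the two two-point measures $(1-\tfrac\alpha c)\delta_0+\tfrac\alpha c\delta_c$ and $(1-\tfrac\alpha s)\delta_0+\tfrac\alpha s\delta_s$, and $Q\mapsto\int\phi\dd Q$ is \emph{affine} in $Q$, hence maximized at an extreme point; both extreme points have the form $(1-\tfrac\alpha t)\delta_0+\tfrac\alpha t\delta_t$ with $t\in\{c,s\}\subseteq[\alpha\vee c,1]=[c,1]$. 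That produces an element of the candidate set~\eqref{Argmax.phi.alpha.cand} in the Argmax, which is the claim. The main obstacle I anticipate is bookkeeping around the ambiguous splitting at the point $c$ and the mean-raising step when $\mu(P)<\alpha$; both are routine but need to be phrased so that no inequality is lost. A secondary point is to confirm attainment of the supremum so that ``Argmax'' is genuinely nonempty --- handled by compactness of $\Prob([0,1])$ in the weak-$*$ topology together with continuity of $P\mapsto\int\phi\dd P$ (valid since $\phi$ is bounded continuous on $[0,1]$) and closedness of $\{\mu(P)\le\alpha\}$.
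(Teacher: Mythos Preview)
Your approach---two applications of Lemma~\ref{lem.convex.order}, one on $[0,c]$ where $\phi$ is convex and one on $[c,1]$ where $\phi$ is concave---is exactly the paper's. The difference is the \emph{order}: the paper first replaces the conditional law on $[0,c]$ by a two-point law on $\{0,c\}$ (so that afterwards $P(\,]0,c[\,)=0$), and \emph{then} collapses the conditional law on $[c,1]$ to a Dirac $\delta_\rho$. Because the atom at $c$ produced by the first step is absorbed into the second, the output is already a two-point measure $(1-\alpha/\rho)\delta_0+(\alpha/\rho)\delta_\rho$ with $\rho\ge c$ and, automatically from $P([c,1])\le1$, $\rho\ge\alpha$; no extreme-point clean-up is needed, and no case distinction between $\alpha<c$ and $\alpha\ge c$ arises. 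Your order (concave step first) leaves a three-point measure on $\{0,c,s\}$ and forces the extra affine/extreme-point argument, which you carry out correctly only under your standing assumption $\alpha<c$.

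That assumption is where the gap lies. You announce that the case $c\le\alpha$ will be ``folded into the general argument,'' but your extreme-point step fails there: with $c\le\alpha\le s$ the feasible segment of mean-$\alpha$ laws on $\{0,c,s\}$ has as one endpoint the law supported on $\{c,s\}$, which is \emph{not} of the form $(1-\alpha/t)\delta_0+(\alpha/t)\delta_t$. (It can be further improved to $\delta_\alpha$ by one more use of concavity on $[c,1]$, so the gap is fillable, but you did not do this.) Two smaller points: your mean-raising sentence mixing with $\delta_0$ is backwards---mixing with $\delta_0$ lowers the mean---and the paper's clean fix is $P_s:=(1-s)P+s\delta_1$ with $s=(\alpha-\mu(P))/(1-\mu(P))$; and your attainment argument should be stated first (Argmax nonempty by weak compactness and continuity of $\phi$) and the reductions then applied to an actual maximizer, rather than phrased as ``the supremum is attained by some $Q$ on $\{0,c,s\}$,'' which your reductions alone do not establish since $s$ depends on the competitor $P$.
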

\begin{proof}Let $\cP$ denote the set in \eqref{Argmax.phi.alpha}.
As $\{P\in\Prob([0,1]): \mu(P)\le \alpha\}$ is nonempty and closed 
with respect to convergence in distribution, the
compactness of $[0,1]$ and the continuity of $\phi$
imply by Prohorov's theorem (in the elementary Helly case, 
see e.g.~\cite[Section 25]{Billingsley})
that $\cP$ is nonempty. So let $P\in\cP$. If 
$\mu(P)< \alpha$, then $P_s := (1-s)P+s\delta_1\in\Prob([0,1])$ 
with $s:=(\alpha-\mu(P))/(1-\mu(P))$ satisfies
$\mu(P_s)=\alpha$ and, since $\phi$ is increasing, 
$\int \phi \dd P \le \int \phi\dd P_s$. 
Hence we may assume $\mu(P)=\alpha$ in what follows.    
If $P([0,c])>0$ and $c>0$, then 
we apply the second inequality in Lemma~\ref{lem.convex.order}
with  $a:=0$ and $b:=c$ to  the law 
$\cB([0,c]) \ni B \mapsto P(B)/P([0,c])$  with  mean $\lambda$, say, 
to see that $\int\phi \dd P \le \int\phi \dd Q$
where $Q:= P([0,c])(\frac{c-\lambda}{c}\delta_0 + \frac{\lambda}{c}\delta_c)
+\1_{]c,1]}P \in \cP$ 
also satisfies $\mu(Q)=\alpha$.    
Hence we may also assume $P({]0,c[})=0$ in what follows. 
If, finally, $P([c,1])>0$ and
$c<1$, then we apply the first inequality in Lemma~\ref{lem.convex.order}
with $a:= c$ and $b:= 1$ to the law 
$\cB([c,0]) \ni B \mapsto P(B)/P([c,1])$ with  mean $\rho$, say,
to   see that $\int\phi \dd P \le \int\phi \dd R$
where $R := P(\{0\})\delta_0 + P([c,1])\delta_\rho$ if $c>0$,
and $R := \delta_\rho$ if $c= 0$, so that $R$   belongs to $\cP$ and to
the set in   \eqref{Argmax.phi.alpha.cand}.
\end{proof}

From Lemmas~\ref{lem.Argmax} und~\ref{lem.binom}, 
recalling~\eqref{Alt.rep.fnk} and observing that each member of the 
set in~\eqref{Argmax.phi.alpha.cand} has mean $\alpha$, we get 
\begin{lem}\label{lem.main} For every $\alpha\in[0,1]$,  we have  
\[
\max \left\{\int \mathrm{B}_{n,t}(\{k,\ldots,n\})\dd P(t)\,:\,
   P\in\Prob([0,1]), \mu(P)\le\alpha    \right\}
 &=& f_{n,k}(\alpha)
\]
\end{lem}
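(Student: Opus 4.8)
The plan is to obtain the identity essentially by combining the two preceding lemmas, with only a little endpoint bookkeeping. First I would fix $\alpha\in[0,1]$ and abbreviate $\phi:=\phi_{n,k}$ and $c:=\frac{k-1}{n-1}$. By Lemma~\ref{lem.binom} the function $\phi$ is continuous, increasing, convex on $[0,c]$, concave on $[c,1]$, and satisfies $\phi(0)=0$, and $c\in[0,1]$, so Lemma~\ref{lem.Argmax} applies to $\phi$, $c$, and $\alpha$. That lemma supplies both pieces I need: it guarantees that the supremum in the statement is attained (legitimising the ``$\max$''), and it produces a maximiser of the two-point form $P_t=(1-\frac{\alpha}{t})\delta_0+\frac{\alpha}{t}\delta_t$ for some $t\in[\alpha\vee c,1]$; note here that for $t\ge\alpha$ this $P_t$ is indeed a probability measure, with mean $\alpha$, hence lies in the admissible set.

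Next I would evaluate the objective at such a $P_t$: using $\phi(0)=0$, one has $\int\phi\dd P_t=\frac{\alpha}{t}\,\phi(t)=\alpha\,\psi_{n,k}(t)$, whence the maximum in the statement equals $\alpha\max_{t\in[\alpha\vee c,1]}\psi_{n,k}(t)$.

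It then remains to replace the range $[\alpha\vee c,1]$ by $[\alpha,1]$ and to invoke~\eqref{Alt.rep.fnk}; this is the only step requiring an argument, and it is short. If $\alpha\ge c$ nothing is needed. If $\alpha<c$, then $[\alpha,c]\subseteq[0,p_{n,k}]$ since $p_{n,k}\ge c$ by Lemma~\ref{Lem.psi}, and on $[0,p_{n,k}]$ the function $\psi_{n,k}$ is increasing, again by Lemma~\ref{Lem.psi}; hence $\psi_{n,k}$ on $[\alpha,c]$ is dominated by $\psi_{n,k}(c)\le\max_{[c,1]}\psi_{n,k}$, so discarding $[\alpha,c]$ does not change the maximum over $[\alpha,1]$. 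Either way $\alpha\max_{t\in[\alpha\vee c,1]}\psi_{n,k}(t)=\alpha\max_{t\in[\alpha,1]}\psi_{n,k}(t)$, which is exactly $f_{n,k}(\alpha)$ by~\eqref{Alt.rep.fnk}.

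I do not expect a genuine obstacle here: the substantive work is already packaged in Lemmas~\ref{lem.Argmax} and~\ref{Lem.psi}, and what is left is bookkeeping. The points to be careful about are purely formal — verifying admissibility and the mean of $P_t$ as above, matching the interval $[\alpha\vee c,1]$ from Lemma~\ref{lem.Argmax} against the interval $[\alpha,1]$ in~\eqref{Alt.rep.fnk}, and handling the degenerate case $n=k=1$, where $c=0$, $\phi(p)=p$, $p_{1,1}=0$ and $\psi_{1,1}\equiv1$, so that the chain of equalities collapses to the trivial $\max\{\int t\dd P(t):\mu(P)\le\alpha\}=\alpha=f_{1,1}(\alpha)$.
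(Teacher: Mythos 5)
Your proposal is correct and follows essentially the same route as the paper: apply Lemma~\ref{lem.Argmax} (with hypotheses supplied by Lemma~\ref{lem.binom}) to get a two-point maximiser $(1-\frac{\alpha}{t})\delta_0+\frac{\alpha}{t}\delta_t$ of mean $\alpha$, evaluate the objective there to get $\alpha\,\psi_{n,k}(t)$, and identify the result with $f_{n,k}(\alpha)$ via~\eqref{Alt.rep.fnk}. The paper compresses this into one sentence; your only addition is to spell out the harmless interval bookkeeping $[\alpha\vee\frac{k-1}{n-1},1]$ versus $[\alpha,1]$, which the monotonicity of $\psi_{n,k}$ from Lemma~\ref{Lem.psi} settles exactly as you say.
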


\textsc{Proof of Theorem \ref{Main.result}.}
Let $f:[0,1]\rightarrow[0,1]$ be an increasing function, 
fixed in the entire proof.
If  $\alpha\in[0,1]$  and  
\la  \label{Eq:Def.I.beta}
 I \,:=\,f^{-1}([0,\alpha]),\quad \beta:=\sup I,\quad \sup\emptyset :=0
\al
so that  $I=[0,\beta[$ or $I=[0,\beta]$, 
and if further $K$ is a P-kernel for a hypothesis $\cP_0$ 
on a sample space $\cX$, $ K_{n,k,f}$ is defined  as
in~\eqref{Def.Knkf},  $P\in\cP_0$  and 
\la      \label{Def.P.tilde}
  \tilde{P} &:=& K(\cdot,I) \im P      \,\,\in\,\, \Prob([0,1])
\al
then we have 
\la
 \nonumber \big(P \otimes K_{n,k,f}\big)(\cX\times[0,\alpha])  
 &=& \int_\cX \big( U_{k:n}\im K(x,\cdot)^{\otimes
   n}_{}\big)(I )   \, P(\mathrm{d}x) \\
\label{Eq:uses.orderstats.cdf} &=&\int_\cX \mathrm{B}_{n,K(x,I )}(\{k,\ldots,n\}) 
 \, P(\mathrm{d}x) \\
\label{16.5} &=&\int_{[0,1]} \mathrm{B}_{n,t}(\{k,\ldots,n\}) \,\tilde{P}(\mathrm{d}t)\\
\label{le.fnk} &\le& f_{n,k}(\beta) 
\al
where at~\eqref{Eq:uses.orderstats.cdf}
we have used the standard formula for the distribution functions of
order statistics \cite[Exercise 14.7, unfortunately missing in 
the third edition from 1995]{Billingsley},
and where inequality~\eqref{le.fnk} follows from Lemma~\ref{lem.main}
applied to $\beta$ and $\tilde{P}$,  as 
$\mu( \tilde{P})=\int_\cX K(\cdot,I)\dd P
= \big(P\otimes K\big)(\cX\times I)
\le \big(P\otimes K\big)(\cX\times[0,\beta])
 \le \beta$
by the P-kernel assumption~\eqref{Q.rand.P-variable}. 

Now let us assume condition (ii). Then,  for every $\alpha\in[0,1]$,
the corresponding $\beta$ from~\eqref{Eq:Def.I.beta} satisfies
$\beta \le \sup f_{n,k}^{-1}([0,\alpha]) = f_{n,k}^{-1}(\alpha)$,
using Proposition~\ref{fnk-properties} for the last equality, and hence 
\[
 f_{n,k}(\beta) &\le& f^{}_{n,k}(f_{n,k}^{-1}(\alpha)) \,\,\,=\,\,\,\alpha  
\]
so that $ \big(P \otimes K_{n,k,f}\big)(\cX\times[0,\alpha]) \le \alpha$.
Hence (i) follows.

To prepare for the proof of the converse implication, 
let us consider  $\cX:=[0,1]$,  $\cP_0:=\{P\}$ with $P :=\mathrm{U}_{[0,1]}$, 
and $t\in[0,1]$. Then 
\[
 K(x,\cdot) &:=& t \delta_{xt} + (1-t)\mathrm{U}_{[t,1]} 
 \qquad\text{ for }x\in\cX
\]
defines a P-kernel $K\in\Mark(\cX,[0,1])$, since for $\alpha\in[0,1]$
we have 
\la
 K(x,[0,\alpha]) &=& t\cdot(xt\le \alpha) + (\alpha-t)_+ \label{Kxalpha1} \\
 K(x,[0,\alpha[) &=& t\cdot(xt< \alpha) + (\alpha-t)_+\label{Kxalpha2}
\al
for $x\in\cX$ and hence, using just \eqref{Kxalpha1}, 
\[
 \big(P\otimes K\big)(\cX\times [0,\alpha]) 
  &=& \alpha\wedge t + (\alpha-t)_+ \,\,\,=\,\,\, \alpha
\] 
The identities \eqref{Kxalpha1} and \eqref{Kxalpha2} further yield
\la   \label{Im.measure}
 K(\cdot,[0,\alpha])\im P &=&     K(\cdot,[0,\alpha[)\im P \\
 &=& \big(1-\frac{\alpha}{t} \wedge 1\big)\delta_{(\alpha-t)_+}
  +  \big(\frac{\alpha}{t} \wedge 1\big) \delta_{t+(\alpha-t)_+} \nonumber
\al
for $\alpha \in[0,1]$. Let us now fix $\alpha\in[0,1]$
for the rest of this paragraph, define $I$ and $\beta$ as in~\eqref{Eq:Def.I.beta}
above, 
recall $\psi_{n,k}$ and $p_{n,k}$ from Section~\ref{Sec:Main.result},
and specialize to $t:= p_{n,k}\vee \beta$. 
Then~\eqref{Def.P.tilde} and~\eqref{Im.measure}, the latter with 
$\beta$ in place of $\alpha$, yield
\[ 
 \tilde{P} &=&  \big(\,1-\frac \beta t\,\big)\delta_0 + \frac{\beta}{t}\delta_t
\]
and the integral in~\eqref{16.5} reduces to 
$\beta\, \psi_{n,k}(t) = f_{n,k}(\beta)$ by the definition
in~\eqref{Def.fnk.new}, and hence we get
\la    \label{Eq.in.le.fnk}
     \big(P \otimes K_{n,k,f}\big)(\cX\times[0,\alpha]) &=& f_{n,k}(\beta)
\al

Now let us assume condition~(i). Then the left hand side  of~\eqref{Eq.in.le.fnk}
is at most $\alpha$, and hence we get 
\la  \label{Ineq.17} 
  f_{n,k}^{}\big(\sup f^{-1}_{}([0,\alpha]) \big) 
  &\le& \alpha\qquad\text{ for }\alpha\in[0,1]
\al
We recall Proposition~\ref{fnk-properties} for the properties of $f_{n,k}$ 
used below. If now $f(\alpha_0) <  f_{n,k}^{}(\alpha_0)$ for some $\alpha_0 \in[0,1]$,
then $\alpha_0>0$ as $ f_{n,k}^{}(0)=0$, and hence the continuity 
of $f_{n,k}^{}$ yields an $\alpha_1\in[0,\alpha_0[$ with $f(\alpha_0) \le
f_{n,k}^{}(\alpha_1)$, and  thus 
\[
 \sup f^{-1}_{}([0, f_{n,k}^{}(\alpha_1)]) 
\,\,\,\ge \,\,\,\sup f^{-1}_{}([0, f(\alpha_0)]) 
&\ge& \alpha_0 \,\,\,> \,\,\, \alpha_1   
\]  
and hence  
\[
 f_{n,k}^{}\big(\sup f^{-1}_{}([0, f_{n,k}^{}(\alpha_1)])\big) 
 &>&  f_{n,k}^{}( \alpha_1)   
\]
in contradiction to~\eqref{Ineq.17} with $\alpha:=f_{n,k}(\alpha_1)$.  Thus (ii) holds.
\hfill \Halmos

\section*{Acknowledgements}
Thanks are due to  my coauthors from \cite{Drill.paper} for 
the cooperation leading to the problem treated in the present
paper, in particular to  Christos Astaras for
collecting the data and initiating our joint work, and to 
Stefan Krause for numerous discussions about the data analysis,
and also for actually doing the computations mentioned in 
Example~\ref{Example.drill}.
I further thank Jannis Dimitriadis for proofreading and 
Todor Dinev for help with Sweave.


\begin{thebibliography}{99}
\bibitem{Drill.paper}
\textsc{Astaras, C.}, \textsc{Krause, S.}, \textsc{Mattner, L.},
\textsc{Rehse, C.} and \textsc{Waltert, M.} (2011).
Associations between the drill (\textit{Mandrillus leucophaeus})
and sympatric primates in Korup National Park, Cameroon.
\textit{American Journal of Primatology} \textbf{73}, 127-134.

\bibitem{Besag.Clifford.1989}
\textsc{Besag, J.} and \textsc{Clifford, P.} (1989).
Generalized Monte Carlo significance tests.
\textit{Biometrika} \textbf{76}, 633-642.

\bibitem{Billingsley}
\textsc{Billingsley, P.} (1986).
\textit{Probability and Measure.} Second edition, Wiley.

\bibitem{Karlin.Studden}
\textsc{Karlin, S.} and \textsc{Studden, W.J.} (1966).
\textit{Tschebycheff Systems: With Applications in Analysis and Statistics.}
Wiley.

\bibitem{Adlerrochen}
\textsc{Krause, S.}, \textsc{Mattner, L.}, \textsc{James, R.},
\textsc{Guttridge, T.}, \textsc{Corcoran, M.J.}, \textsc{Gruber, S.H.} and
\textsc{Krause, J.} (2009).
Social network analysis and valid Markov Chain Monte Carlo tests of null 
models. 
\textit{Behavioral Ecology and Sociobiology}
\textbf{63}, 1089-1096.

\bibitem{Li.Graubard.2009}
\textsc{Li, Y.} and \textsc{Graubard, B.I.} (2009).
Testing Hardy-Weinberg equilibrium and homogeneity of 
Hardy-Weinberg disequilibrium using complex survey data.
\textit{Biometrics} \textbf{65}, 1096-1104.

\bibitem{Rasch}
\textsc{Rasch, G.} (1960). 
\textit{Probabilistic Models for Some  Intelligence and Attainment Tests.}  
Danish Institute for Educational Research.
Reprinted 1980 ``with a Foreword and Afterword by Benjamin Wright'',
The University of Chicago Press.
Reprinted 2010, The Institute for Objective Measurement, Chicago.  

\end{thebibliography}
\end{document}